\theoremstyle{plain}
\newtheorem{theorem}{Theorem}
\newtheorem{lemma}[theorem]{Lemma}
\theoremstyle{definition}
\theoremstyle{remark}
\newcommand{\abrtitle}{Space-time covariance functions with compact support}
\newcommand{\eee}{{\rm e}}
\newcommand{\LL}{\mathcal{L}}
\newcommand{\comp}{\mathbb{C}}
\newcommand{\real}{\mathbb{R}}
\renewcommand{\Re}{\ensuremath{{\rm Re\,}}}
\newcommand{\C}{\mathbb{C}}
\newcommand{\N}{\mathbb{N}}
\newcommand{\R}{\mathbb{R}}
\newcommand{\s}{\mathbb{S}}
\begin{document}
\allowdisplaybreaks \setlength{\baselineskip}{20pt}
\begin{titlepage}
\begin{center}
\textbf{\large \abrtitle}\\[6ex]

\vspace{2cm} {\itshape

 Viktor P. Zastavnyi \\
    \textsf{Donetsk National University} \\
    \textsf{Department of Mathematics} \\
    \textsf{Universitetskaya str. 24, Donetsk, 340001, Ukraine} \\
    \texttt{zastavn@rambler.ru} \\

\bigskip \bigskip

 Emilio Porcu \\ 
    \textsf{University Jaume I of Castell\'{o}n} \\
    \textsf{Department of Mathematics} \\
    \textsf{Campus Riu Sec} \\
    \textsf{E-12071 Castellón, Spain} \\
    \texttt{porcu@mat.uji.es}

 \bigskip \bigskip
 \bigskip \bigskip

}
\end{center}
 \bigskip \bigskip
\begin{center}
\tt {\bf \tt Abstract}
\end{center}
{\small \tt We characterize completely the Gneiting class
\cite{Gneiting_2002} of space-time covariance functions and give
more relaxed conditions on the involved functions. We then show
necessary conditions for the construction of compactly supported
functions of the Gneiting type. These conditions are very general
since they do not depend on the Euclidean norm. Finally, we discuss
a general class of positive definite functions, used for
multivariate Gaussian random fields. For this class, we show
necessary criteria for its generator to be compactly supported. \\

\bigskip \bigskip\noindent
 \emph{Keywords}: Compact support, Gneiting's class, Positive definite, Space-time.}

\title{\bfseries\abrtitle}
 \bigskip \bigskip \vspace*{.5cm}
\author{Viktor P. Zastavnyi and Emilio Porcu}
\end{titlepage}
\date{}
\maketitle

\section{Introduction}

Recent literature persistently emphasizes the use of approximation
methods and new methodologies for dealing with massive spatial data
set. When dealing with spatial data, calculation of the inverse of
the covariance matrix becomes a crucial problem. For instance, the
inverse is needed for best linear unbiased prediction (alias
kriging), and is repeatedly calculated in the maximum likelihood
estimation or the Bayesian inferences. Thus, large spatial sample
sizes traduce into big challenges from the computational point of
view.

A natural idea that made proselytes in the last year is to make the
covariances exactly zero after certain distance so that the
resulting matrix has a high proportion of zero entries and is
therefore a sparse matrix. Operations on sparse matrices take up
less computer memories and run faster. However, this should be done
in a way to preserve positive definiteness of the resulting
covariance matrix. The idea goes under the name of covariance {\em
tapering}, by meaning that the true covariance is multiplied
pointwise with a compactly supported and radial correlation
function. This operation is technically justified by the fact that
the Schur product preserves positive definiteness.

The effects of tapering in terms of estimation and interpolation
have been recently inspected by \cite{Du:Zhang:Mandrekar}, where
general conditions are given in order to ensure that tapering does
not affect the efficiency of the maximum likelihood estimator. For
spatial interpolation, \cite{Furrer:Genton:Nychika:2006} show that
under some regularity conditions, tapering procedures yield
asymptotically optimal prediction. In order to assess these
properties, the asymptotic framework adopted by the authors is of
the infill type, and the tool allowing to evaluate the performances
of tapering is the equivalence of Gaussian measures, for which a
comprehensive theory can be found in the seminal work by Yadrenko
\cite{Yadrenko}.

These points fix very briefly the state of the art and we refer the
reader to \cite{Du:Zhang} for an excellent survey on the topic.

Although tapering has been well understood in the spatial framework,
there is nothing done, to the knowledge of the authors, for the
spatio-temporal case. In particular, the use of tapering is at least
questionable in space-time, since the same type of asymptotics does
not apply and thus it is not easy to evaluate its performances.

But a deeper look at this problem also highlights the non existence,
in the literature, of space-time covariance functions that are
compactly supported over space, time or both. These facts motivate
the research documented in this manuscript.

We deal with challenges related to space-time covariance functions.
If spatial data set can be massive, one can imagine how the
dimensionality problem affects space-time estimation and
interpolation. This problem may be faced on the base of two
perspectives that can be illustrated through the celebrated T.
Gneiting class of covariance functions \cite{Gneiting_2002}: for
$(x,t) \in \R^{d+l}$, the function
\begin{equation}
\label{gneiting_class} (x,y) \mapsto K(x,t):= h(\|t\|^2)^{-d/2}
\varphi \left ( \frac{\|x\|^2}{h(\|t\|^2)}\right )
\end{equation}
is positive definite, for $\varphi$ completely monotone on the
positive real line and $h$ a Bernstein function. For $l=1$, the
function above is a stationary and nonseparable space-time
covariance. This function has been persistently used by the
literature and a Google scholar search highlights that currently
there are over 90 papers where this covariance has been used for
applications to space-time data.

If there are many observations over space, time or both, then the
use of this function would be questionable for the computational
reasons exposed above. A more intriguing perspective is to consider
 a function of the Gneiting type, but replacing the generator
$\varphi$ in equation (\ref{gneiting_class}) with a compactly
supported function, and inspecting the conditions ensuring that
permissibility is preserved on some $d$-dimensional Euclidean space.
The results are illustrated in the following sections.

An auxiliary result of independent interest is also given: we
characterize completely the Gneiting class and give more general
conditions for its permissibility.

The ratio mentis of this paper leads then to consider a general
class of covariances, originally proposed in Porcu {\em et al.}
\cite{Porcu:Gregori:Mateu2006} and more recently in \cite{Apa-Gen}.
Both groups of authors show that is class of covariances can be very
versatile since it can be used for two-fold purposes: on the one
hand, it can be effectively used to deal with zonally anisotropic
structures, on the other hand it can be adapted to represent the
covariance mapping associated to a multivariate random field, which
is highly in demand since there are very few models with these
characteristics \cite{gnei:schlath}.

As a conclusion to the preludium, the plan of the paper is the
following: in Section 2 we present basic facts about positive and
negative definite functions. Section 3 characterizes completely the
Gneiting class, for which only sufficient conditions were known
until now. In Section 4 we present necessary conditions for
compactly supported covariances of the Gneiting type. Similar
results are obtained in Section 5 for the multivariate class of
cross-covariances proposed in \cite{Porcu:Gregori:Mateu2006}.

\section{Preliminaries}

This section is largely expository and contains basic facts and
information needed for a self-contained exposition. We shall
enunciate the concepts of positive and negative definiteness, as
well as the material related to them, working with linear spaces and
subspaces. The space-time notation will be used only when necessary
for a clearer exposition of results.

For $E$ a real linear space, we denote by {\rm FD}$(E)$ the set of
all linear finite-dimensional subspaces of $E$. If $\dim E=n\in\N$
and $e_1,\ldots,e_n$ are basis in $E$, then
\begin{eqnarray*}
f\in C(E) & \iff & f(x_1e_1+\ldots+x_ne_n)\in C(R^n) \\
&{\rm and}& \\
f\in L(E) &\iff & f(x_1e_1+\ldots+x_ne_n)\in L(R^n).
\end{eqnarray*}

Also, we call $C_0(E)$ the set of all function $f\in C(E)$ such that
$f$ has compact support. If $\dim E=\infty$, then
  $f\in C(E)$ $\iff$ $f\in C(E_0)$ $\forall E_0\in {\rm FD}(E)$.

A complex-valued function $f:\ E\to\C$ is said to be positive
definite on $E$ (denoted hereafter $f\in \Phi(E)$) if for any finite
collection of points $\{\xi_i\}_{i=1}^n \in E$ the matrix
$(f(\xi_i-\xi_j))_{i,j=1}^n$ is positive definite, \emph{i.e.}\
$$
    \text{for all\ \ } a_1, a_2, \ldots, a_n\in\comp\::\qquad \sum_{i,j=1}^n a_i f(\xi_i-\xi_j) \overline{a}_j \geq 0.
$$

It is well known that the family of positive definite functions is a
convex cone which is closed under addition, products, pointwise
convergence and scale mixtures. Briefly, we have the following
properties.

Let $f$, $f_{i}\in\Phi(E)$, $i\in\N$. Then:
\begin{description}
\item[] 1. $|f(x)|\le f(0)$,
 $\overline{f(-x)}=f(x)$, $|f(x)-f(h)|^2\le 2f(0){\rm Re}(f(0)-f(x-h))$, $x$, $h\in E$;
 \item[] 2.  $\lambda_{1}f_{1}+\lambda_{2}f_{2}$ with $\lambda_i\ge 0$,
$\bar{f}$, $\Re f$, $f_1f_2\in\Phi(E)$;
 \item[] 3.  if, for all $x\in E$, the finite limit $\lim\limits_{n\to\infty}
f_{n}(x)=:g(x)$ exists, then $g\in\Phi(E)$;
\item[] 4.  for any linear
operator $A:\ E_1\to E$ the function $f\circ A$ belongs to
$\Phi(E_1)$; in  particular, $f\in\Phi(E_1)$ for any linear subspace
$E_1$ from $E$.
\end{description}
Let $E=\R^n$. The celebrated Bochner's theorem establishes a one to
one correspondence between continuous positive definite functions
and the Fourier transform of a positive and bounded measure, {\em
i.e.} $f(x)=F_n (\mu(u))(x)$. If $\mu$ is absolutely continuous with
respect to the Lebesgue measure, than $d \mu(u)= \widehat{f}(u)du$,
for  $\widehat{f}$ nonnegative. This can be rephrased in the
following way: if $f\in C(\R^n)\cap L(\R^n)$, then $f\in\Phi(\R^n)$
if and only if
$$
\widehat{f}(u)=F^{-1}_n(f)(u):=\int_{\R^n}e^{i(u,x)}f(x)\ dx\ge 0,
\quad u\in\R^{n},
$$
for $(\cdot,\cdot)$ the usual dot product. The function
$\widehat{f}$ is called spectral density or Fourier pair associated
to $f$.

If $f$ is a radially symmetric and continuous function depending on
the squared Euclidean norm $\|\cdot\|_2^2$, i.e.
$f(x)=\varphi(\|x\|_2^2)$, $\varphi\in C_{[0,+\infty)}$, then the
Fourier transform above simplifies to the Bessel integral (if in
addition $f\in L(\R^n)$)
\begin{equation} \label{eq:hankel_euclidean}
g_n(s):=\int_{0}^{+\infty}\varphi(u^2)u^{n-1}j_{\frac
n2-1}(su)\,du\,,
\end{equation}
 where $j_{\lambda}(u):=\frac{J_{\lambda}(u)}{u^{\lambda}}$, with
  $J_{\lambda}$ a Bessel function of the first kind. Thus $f \in
  \Phi(\R^n)$, for some $n \in \mathbb N$ and for $f$ radially symmetric, if and only if $g_n(u) \geq
  0$  $\forall u >0$.

A function $f:]0,\infty[\to\real$ is called \emph{completely
monotone}, if it is arbitrarily often differentiable and
\begin{equation*}
    (-1)^nf^{(n)}(x)\ge 0\text{\ \ for\ \ } x>0,\; n=0,1,\ldots.
\end{equation*}
By Bernstein's theorem the set $M_{(0,\infty)}$ of completely
monotone functions coincides with that of Laplace transforms of
positive measures $\mu$ on $[0,\infty[$, \emph{i.e.}\
\begin{equation*}\label{eq:bernstein}
    f(x)=\LL \mu(x)=\int_{[0,\infty[} \eee^{-xt}\,d\mu(t),
\end{equation*}
where we only require that $\eee^{-xt}$ is $\mu$-integrable for any
$x>0$. $M_{(0,\infty)}$ is a convex cone which is closed under
addition, multiplication and pointwise convergence.

The connection with the function $g_n(\cdot)$ gives the celebrated
Schoenberg (1939) theorem by which a radial function
 $f(x)=\varphi(\|x\|_2^2)$, $\varphi\in
C_{[0,+\infty)}$, belongs to $\Phi(\R^n)$ for all $n \in \mathbb N$
if and only if $\varphi$ is completely monotone on the positive real
line, and in this case the Bessel integral in equation
(\ref{eq:hankel_euclidean}) reduces to a Gaussian mixture. Finally,
a Bernstein function is a positive function that is infinitely often
differentiable and whose first derivative is completely monotone.
For a more detailed exposition on these facts the reader is referred
to \cite{porcu-schilling}.

In this paper we shall be also dealing with functions depending not
on the Euclidean norm but on some homogeneous continuous function
$\rho: E \to \R$ such that $\rho(tx)=|t|\rho(x)$ $\forall t\in\R,
x\in E$ and $\rho(x)>0$, $x\ne 0$. If  $\varphi\in C_{[0,+\infty)}$
and $\int_{0}^{+\infty}|\varphi(t^2)|t^{n-1}\;dt<+\infty$,
 then we have that
$\varphi \circ \rho^2\in \Phi(\R^n)$ if and only if the function
\begin{equation}\label{Gm}
 \R^n \ni v \mapsto G_n(v):=\int_{\R^n}\varphi(\rho^2(y))e^{i(y,v)}\,dy
 \end{equation}
is nonnegative for all $v \in \R^n$. If $\rho$ is the Euclidean
norm, then the functions $G_n(\cdot)$ and $g_n(\cdot)$ are related
by the well know equality $G_n(v)=(2\pi)^{\frac{n}{2}}g_n(||v||_2)$.

Finally, a complex-valued function $h:\ E\to\C$ is called
(conditionally) negative definite on $E$ (denoted $h\in N(E)$
hereafter) if the inequality
$$
\sum^{n}_{k,j=1}c_{k}\bar{c}_{j}h(x_{k}-x_{j})\le 0
$$
is satisfied for any finite systems of complex numbers
$c_1,c_2,...,c_n$, $\sum_{k=1}^{n}c_k=0$, and points $x_1,...,x_n$
in $E$.

Let $\{Z(\xi), \xi \in \R^n \}$ be a continuous weakly stationary
and Gaussian random field (RF for short). The associated covariance
function $f:\R^n \to \real$ is positive definite. This can be
rephrased by saying that positive definiteness of a candidate
continuous function $f:\R^n \to\real$ is sufficient condition for
the existence of a continuous weakly stationary and Gaussian RF
having $f(\cdot)$ as covariance function.

 If, additionally, $f(\cdot)$ is
\emph{radially symmetric}, the associated Gaussian RF is called
\emph{isotropic}. Isotropy and stationarity are independent
assumptions but throughout the paper we shall assume both in order
to keep things simple.

To complete the picture, the variance of the increments of an
intrinsically stationary Gaussian RF is called variogram. For two
points of $\R^n$, say $\xi_i$, $i=1,2$, we have that $\mathbb{V}{\rm
ar} \left (Z(\xi_2)-Z(\xi_1) \right ):= \gamma(\xi_2-\xi_1)$. The
mapping $\gamma(\cdot): \R^n \to \R$ is conditionally negative
definite. The additional property of isotropy is then analogously
defined as before.

\section{Complete Characterization of the Gneiting class}


\begin{lemma} \begin{description} \item[]
\item[] {\rm i.} $f\in\Phi(E)\iff f\in\Phi(E_0)$ $\forall E_0\in {\rm FD}(E)$. \item[] {\rm ii.}
If $\dim E=n\in\N$ then $f\in\Phi(E)\iff fg\in\Phi(E)$  $\forall
g\in\Phi(E)\cap C_0(E)$. \end{description}
\end{lemma}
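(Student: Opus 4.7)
For part (i), I would simply unwrap the definition of positive definiteness, which is a property about finite collections of points. The forward implication is an instance of property 4 from the preliminaries: restricting $f\in\Phi(E)$ to a subspace $E_0$ yields $f|_{E_0}\in\Phi(E_0)$. For the converse, given any finite system $\xi_1,\ldots,\xi_m \in E$, set $E_0 := \operatorname{span}(\xi_1,\ldots,\xi_m)\in\mathrm{FD}(E)$; the inequality required for $f\in\Phi(E)$ at these points is exactly the one delivered by the hypothesis $f|_{E_0}\in\Phi(E_0)$.

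For part (ii), fix a basis to identify $E$ with $\R^n$. One direction is immediate: if $f\in\Phi(\R^n)$, then for any $g\in\Phi(\R^n)\cap C_0(\R^n)$, property 2 (closure of $\Phi$ under products) gives $fg\in\Phi(\R^n)$. The real content is the converse, and my plan is to approximate the constant function $1$ by an ``approximate identity'' inside $\Phi(\R^n)\cap C_0(\R^n)$. Concretely, let $\chi_k$ be the indicator of the Euclidean ball $B_k\subset\R^n$ of radius $k$, and set
\[
 g_k(x) := \frac{(\chi_k * \chi_k)(x)}{(\chi_k * \chi_k)(0)}.
\]
Each $g_k$ is continuous, compactly supported in $B_{2k}$, satisfies $g_k(0)=1$, and is positive definite because its Fourier transform equals $c\,|\widehat{\chi_k}|^2\ge 0$. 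So $g_k\in\Phi(\R^n)\cap C_0(\R^n)$. Moreover,
\[
 g_k(x) \;=\; \frac{\operatorname{vol}(B_k\cap(B_k+x))}{\operatorname{vol}(B_k)} \;\longrightarrow\; 1 \qquad (k\to\infty)
\]
for every fixed $x\in\R^n$. By the standing hypothesis, $fg_k\in\Phi(\R^n)$ for every $k$, and since $(fg_k)(x)\to f(x)$ pointwise, property 3 (closure of $\Phi$ under pointwise limits) yields $f\in\Phi(\R^n)$.

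The only non-routine ingredient is the construction of the sequence $(g_k)$; everything else is bookkeeping with the four closure properties of $\Phi(E)$ recorded in the preliminaries. The self-convolution trick ensures simultaneously the three requirements compact support, continuity, and positive definiteness, and the normalization ensures pointwise convergence to $1$. No appeal to Bochner's theorem or to any regularity assumption on $f$ is needed beyond what is already built into the definition of $\Phi(E)$.
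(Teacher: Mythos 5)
Your proof is correct and takes essentially the same route as the paper: part (i) is argued identically (restrict to the linear span of the given points), and for part (ii) the paper likewise produces a sequence in $\Phi(E)\cap C_0(E)$ converging pointwise to $1$ and invokes closure under pointwise limits, only using the tensor product of tent functions $(1-\varepsilon|x_1|)_+\cdots(1-\varepsilon|x_n|)_+$ with $\varepsilon\downarrow 0$ in place of your normalized self-convolutions of ball indicators. Both approximating families serve the same purpose, so the arguments coincide in substance.
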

\begin{proof}
\begin{description} \item[]
\item[] {\rm i.} The necessity is obvious. As for the sufficiency, for $n\in\N$ and
$x_1,...,x_n$ in $E$, we have that $x_1,...,x_n\in E_0$ - the linear
span of these elements. Obviously $\dim E_0\le n$.
\item[] {\rm ii.} Again, the necessity is obvious. For the sufficiency, let $e_1,\ldots,e_n$ be basis in $E$. Then we
take
$g(x_1e_1+\ldots+x_ne_n)=(1-\varepsilon|x_1|)_+\cdot\ldots\cdot(1-\varepsilon|x_n|)_+$
and $\varepsilon \downarrow0$. The proof is completed.
\end{description}
\end{proof}

  \begin{lemma}\label{le1}
Let the next conditions be satisfied:
\begin{enumerate}
\item $h,b\in C(E)$ and $h(t)>0$  $\forall t\in E$.
\item $\varphi\in C_{[0,+\infty)}$ and for the some $m\in\N:$
$\int_{0}^{+\infty}|\varphi(u^2)|u^{m-1}\,du<+\infty$.
\item $\rho\in C(\R^m)$, $\rho(tx)=|t|\rho(x)$ $\forall t\in\R, x\in\R^m$ and $\rho(x)>0$, $x\ne 0$.
\end{enumerate}
 Then $$
 K(x,t):=b(t)\varphi\left(\frac{\rho^2(x)}{h(t)}\right)\in \Phi(\R^m\times E)\iff
b(t)(h(t))^{\frac m2}G_m(\sqrt{h(t)}v)\in \Phi(E)\;\forall
v\in\R^m\,,
 $$
 with $G_m(\cdot)$ defined in equation (\ref{Gm}).
\end{lemma}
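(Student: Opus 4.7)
The plan is to recognize the right-hand quantity as the $x$-Fourier transform of $K(\cdot,t)$ and exploit this identification in both directions. The substitution $x=\sqrt{h(t)}\,y$ gives
\[
\widehat{K}(v,t):=\int_{\R^m}K(x,t)\,e^{i(v,x)}\,dx = b(t)h(t)^{m/2}G_m(\sqrt{h(t)}\,v),
\]
and the same substitution together with hypothesis~2 and the positive-homogeneity of $\rho$ shows $K(\cdot,t)\in L^1(\R^m)$ for each $t\in E$. Moreover $\widehat K(v,t)$ is bounded in $v$ for fixed $t$ (since $|G_m|\le\|\varphi\circ\rho^2\|_{L^1}$), and $K$ is continuous on $\R^m\times E$; these regularity facts will be used below without further comment.

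For the forward direction I fix $t_1,\ldots,t_n\in E$ and $a_1,\ldots,a_n\in\C$ and consider
\[
F(x):=\sum_{i,j=1}^n a_i\bar a_j\, K(x,t_i-t_j),\qquad x\in\R^m.
\]
Because $(y_k,t_i)-(y_l,t_j)=(y_k-y_l,t_i-t_j)$ in $\R^m\times E$, the positive-definiteness of $K$ on $\R^m\times E$ applied to the joint collection $\{(y_k,t_i)\}$ with coefficients $\{b_k a_i\}$ yields $F\in\Phi(\R^m)$; $F$ is also continuous and integrable. The Bochner-type criterion cited in Section~2 then forces $\widehat F\ge 0$, which reads $\sum_{i,j} a_i\bar a_j \widehat K(v,t_i-t_j)\ge 0$ for every $v\in\R^m$. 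This is precisely the required positive-definiteness of $\widehat K(v,\cdot)$ on $E$.

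For the reverse direction the natural idea is to write $K(x,t)=(2\pi)^{-m}\int\widehat K(v,t)e^{-i(v,x)}\,dv$, realizing $K$ as a non-negative superposition of the PD products $(x,t)\mapsto e^{-i(v,x)}\widehat K(v,t)$. The main obstacle is that $\widehat K(\cdot,t)$ is not a priori integrable in $v$, so Fourier inversion is not literally valid. I circumvent this by mollifying $K$ in the $x$-variable with a Gaussian $\mathcal G_\eps(x)=(2\pi\eps)^{-m/2}\exp(-|x|^2/(2\eps))$, setting $K_\eps(\cdot,t):=K(\cdot,t)*\mathcal G_\eps$; then $\widehat{K_\eps}(v,t)=\widehat K(v,t)\,e^{-\eps|v|^2/2}$ is integrable in $v$, and Fourier inversion is legitimate. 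For any $(x_k,t_k)$ and $c_k$,
\[
\sum_{k,l} c_k\bar c_l\, K_\eps(x_k-x_l,t_k-t_l)=(2\pi)^{-m}\!\!\int_{\R^m}\!\! e^{-\eps|v|^2/2}\sum_{k,l}(c_k e^{-i(v,x_k)})\overline{(c_l e^{-i(v,x_l)})}\widehat K(v,t_k-t_l)\,dv\ge 0,
\]
since for each $v$ the inner sum is the quadratic form of $\widehat K(v,\cdot)\in\Phi(E)$ in the coefficients $c_k e^{-i(v,x_k)}$, while $e^{-\eps|v|^2/2}\ge 0$. Passing to the limit $\eps\to 0^+$, continuity of $K$ in $x$ gives $K_\eps\to K$ pointwise and the inequality persists for $K$, yielding $K\in\Phi(\R^m\times E)$.
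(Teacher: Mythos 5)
Your proof is correct, but it follows a genuinely different route from the paper. The paper first proves an auxiliary lemma (reduction of positive definiteness on $E$ to finite-dimensional subspaces $E_0$, together with the fact that one may multiply by an arbitrary compactly supported positive definite factor $g(t)$), then multiplies $K(x,t)$ by such a $g(t)$ so that the product is integrable on $\R^m\times E_0$, applies the full Bochner criterion on $\R^m\times E_0$, performs the substitution $x=\sqrt{h(t)}\,y$, and finally strips off $g$ again; the Fourier analysis there is carried out jointly in $(x,t)$. You instead do Fourier analysis only in the $x$-variable: for the forward implication you tensorize the quadratic form to see that $F(x)=\sum_{i,j}a_i\bar a_j K(x,t_i-t_j)$ is a continuous, integrable element of $\Phi(\R^m)$ and read off $\widehat F\ge 0$ at each frequency $v$, which is exactly positive definiteness of $b(t)h(t)^{m/2}G_m(\sqrt{h(t)}v)$ on $E$; for the converse you mollify with a Gaussian so that Fourier inversion in $x$ is legitimate and the quadratic form of $K_\eps$ becomes a nonnegative superposition of quadratic forms of $\widehat K(v,\cdot)\in\Phi(E)$, then let $\eps\downarrow 0$. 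What each approach buys: the paper's argument recycles its Lemma~1 (needed anyway for the later multivariate lemma) and disposes of the lack of integrability in $t$ by the compact-support multiplier; yours dispenses with the finite-dimensional reduction and the multiplier trick altogether, never integrates over $E$, and handles the non-integrability of $\widehat K(\cdot,t)$ in $v$ by Gaussian regularization, so it is somewhat more self-contained. All the regularity points you invoke (integrability of $K(\cdot,t)$ via homogeneity of $\rho$, boundedness of $G_m$, pointwise convergence of the Gaussian mollification at continuity points of an $L^1$ function) do hold under the stated hypotheses, so there is no gap; only a cosmetic remark: you use $b_k$ both for the coefficients in the tensorization step and for the function $b$ of the statement, which should be renamed.
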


\begin{proof}
Observe that $\varphi\left(\rho^2(x)\right)\in L(\R^m)$. We have
that
\begin{eqnarray*}
K(x,t)\in \Phi(\R^m\times E)&\iff& K(x,t)\in \Phi(\R^m\times
E_0)\;\; \forall E_0\in {\rm FD}(E) \\ &\iff& K(x,t)g(t)\in
\Phi(\R^m\times E_0)\;\; \forall E_0\in {\rm FD}(E),\forall g\in
\Phi(E_0)\cap C_0(E_0) \\ &\iff & \iint_{\R^m\times
E_0}K(x,t)g(t)e^{i(x,v)}e^{i(t,u)} dx dt
   \geq 0 \\ && \forall E_0\in {\rm
FD}(E),\forall g\in \Phi(E_0)\cap C_0(E_0)\;,\;\forall v\in\R^m,u\in
E_0.
\end{eqnarray*}
As for the last integral, a change of variables of the type
$x=\sqrt{h(t)}y$ yields that the last inequality is equivalent to
$$
 \int_{E_0}g(t)b(t)(h(t))^{\frac m2}G_m(\sqrt{h(t)}v)e^{i(t,u)}dt\ge 0\;,\;\forall v\in\R^m,u\in
 E_0,
$$
which holds if, and only if $\forall g\in \Phi(E_0)\cap
C_0(E_0),\forall
 v\in\R^m$, we have
\begin{eqnarray*}
 && g(t)b(t)(h(t))^{\frac m2}G_m(\sqrt{h(t)}v)\in \Phi(E_0)\;\;
 \forall E_0\in {\rm FD}(E), \quad  \\
 &\iff& b(t)(h(t))^{\frac m2}G_m(\sqrt{h(t)}v)\in \Phi(E)\;\forall
v\in\R^m\,.
\end{eqnarray*} The proof is completed.
\end{proof}

The following result gives a complete characterization of the
Gneiting class, with the additional feature that only negative
definiteness of the function $h$ is required, whilst Gneiting's
assumptions are much more restrictive as it is required that
$h^{\prime}$ is completely monotone on the positive real line.
Furthermore, we give a simple proof of this result and we defer it to next section for the reasons that will become apparent throughout the paper.

\begin{theorem}\label{th3}
Let  $h\in C(E)$, $h(t)>0$  $\forall t\in E$. Let $d\in\N$. The
following statements are equivalent:
\begin{enumerate}
\item $
 K(x,t):=(h(t))^{-\frac d2}\varphi\left(\frac{||x||_2^2}{h(t)}\right)\in \Phi(\R^d\times E)
 \;\;\forall \varphi\in C_{[0,+\infty)}\bigcap M_{(0,+\infty)}\,.
$
\item $e^{-\lambda h(t)}\in \Phi(E)$ $\forall \lambda>0$.
\end{enumerate}
 \end{theorem}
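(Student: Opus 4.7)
The plan is to combine Lemma \ref{le1} with Bernstein's integral representation. Setting $m=d$ and $\rho(x)=\|x\|_2$ in Lemma \ref{le1} with $b(t)=(h(t))^{-d/2}$ reduces positive definiteness of $K$ on $\R^d\times E$ to the condition $G_d(\sqrt{h(t)}\,v)\in\Phi(E)$ for every $v\in\R^d$, where $G_d$ is the Fourier transform of $\varphi(\|\cdot\|_2^2)$ on $\R^d$. Since the Euclidean $G_d$ for $\varphi(s)=e^{-\lambda s}$ is the Gaussian $(\pi/\lambda)^{d/2}e^{-\|v\|^2/(4\lambda)}$, both directions reduce to manipulating Gaussians.

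For $(1)\Rightarrow(2)$, I would take the particular choice $\varphi(s)=e^{-\lambda s}$, which is completely monotone and lies in $C_{[0,+\infty)}$. Condition (1) combined with Lemma \ref{le1} then gives $(\pi/\lambda)^{d/2}e^{-h(t)\|v\|^2/(4\lambda)}\in\Phi(E)$ for every $v\in\R^d$ and every $\lambda>0$. Fixing $\lambda$ and letting $\|v\|$ range over $(0,+\infty)$, the parameter $\sigma:=\|v\|^2/(4\lambda)$ sweeps out all of $(0,+\infty)$, yielding $e^{-\sigma h(t)}\in\Phi(E)$ for every $\sigma>0$, which is (2).

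For $(2)\Rightarrow(1)$, I would invoke Bernstein's theorem to write $\varphi(s)=\int_{[0,\infty)} e^{-\lambda s}\,d\mu(\lambda)$ for a positive measure $\mu$, so that
$$K(x,t)=\int_{[0,\infty)} (h(t))^{-d/2}e^{-\lambda\|x\|_2^2/h(t)}\,d\mu(\lambda).$$
The Gaussian Fourier identity
$$(h(t))^{-d/2}e^{-\lambda\|x\|_2^2/h(t)}=(4\pi\lambda)^{-d/2}\int_{\R^d}e^{-(\|u\|^2/(4\lambda))\,h(t)}\,e^{i(x,u)}\,du$$
(valid for $\lambda>0$) rewrites each integrand as a scale mixture of the kernels $(x,t)\mapsto e^{i(x,u)}\cdot e^{-(\|u\|^2/(4\lambda))\,h(t)}$. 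For each fixed $u\in\R^d$ and $\lambda>0$, the first factor is in $\Phi(\R^d)$ and the second is in $\Phi(E)$ by assumption (2), so the product lies in $\Phi(\R^d\times E)$. Closure of $\Phi(\R^d\times E)$ under nonnegative linear combinations and pointwise limits (properties 2 and 3 of Section 2, applied to Riemann sums approximating the two integrals above) then propagates positive definiteness through both integrations and delivers $K\in\Phi(\R^d\times E)$.

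The only delicate point I anticipate is the measure-theoretic bookkeeping in $(2)\Rightarrow(1)$: namely, justifying the exchanges of integration needed to realize $K$ as the claimed superposition of elementary positive definite kernels. These exchanges are standard applications of Fubini once one combines the integrability hypothesis of Lemma \ref{le1}, the Gaussian decay of the inner integrand in $u$, and the finiteness of the exponential moments of $\mu$ encoded in $\varphi\in C_{[0,+\infty)}$; everything else is a direct computation.
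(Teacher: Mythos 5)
Your proof is correct, but the forward direction takes a genuinely different and lighter route than the paper's. The paper proves $(1)\Rightarrow(2)$ by specializing statement 4 of Theorem \ref{th1} to $\varphi(s)=e^{-s}$, i.e.\ it routes through the rather heavy construction $\psi_n(t)=\bigl(G_d(\gamma_n\sqrt{h(t)}v_0)/G_d(0)\bigr)^n\to e^{-\lambda h(t)}$, and it must first dispose of the degenerate case $h\equiv h(0)$ separately. You instead exploit the fact that for $\varphi(s)=e^{-\lambda s}$ the transform $G_d$ is itself a Gaussian, so Lemma \ref{le1} (with $b=h^{-d/2}$, $m=d$) \emph{directly} yields $e^{-(\|v\|^2/4\lambda)h(t)}\in\Phi(E)$ for all $v$, and sweeping $\sigma=\|v\|^2/(4\lambda)$ over $(0,+\infty)$ gives (2) with no case distinction and no appeal to Theorem \ref{th1}; this is a cleaner, self-contained argument, though it is tied to the Gaussian and does not generalize to the non-Euclidean $\rho$ of Theorem \ref{th1}. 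Your converse is essentially the paper's (Bernstein representation plus the Gaussian Fourier identity plus closure under positive mixtures and pointwise limits), merely phrased as a direct superposition of elementary kernels $e^{i(x,u)}e^{-\sigma h(t)}$ rather than re-invoking Lemma \ref{le1}. One small point you (and, to be fair, the paper) should make explicit: the Bernstein measure $\mu$ may carry an atom at $\lambda=0$, where your Gaussian identity degenerates; that atom contributes $c\,(h(t))^{-d/2}$, which is handled by the subordination $(h(t))^{-d/2}=\Gamma(d/2)^{-1}\int_0^\infty \sigma^{d/2-1}e^{-\sigma h(t)}\,d\sigma$ together with hypothesis (2). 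With that sentence added, the argument is complete.
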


Let us consider examples of functions $h$ for which the statement 2
in Theorem \ref{th3} holds.
\begin{description}
\item[Example 1] {\rm
 Let $h(t)=||t||_p^\alpha+c$, $c>0$, $0<p\le+\infty$, $\alpha\ge
0$, $t=(t_1,\ldots,t_n)\in\R^n$, where
$||t||_p^p=\sum_{k=1}^{n}|t_k|^p$, $0<p<\infty$, and
$||t||_{\infty}=\sup_{1\le k\le n}|t_k|$. Then
   $$
   e^{-\lambda h(t)}\in \Phi(\R^n)\;\forall \lambda>0
   \iff e^{-||t||_p^\alpha}\in\Phi(\R^n)  \iff 0\le \alpha\le \alpha(l_{p}^{n})\;,$$ where
\begin{equation}
 \alpha(l_{p}^{n})= \left\{ \begin{array}{ll}
2\ &{\rm if }\ n=1,\ 0<p\le\infty;\\
p\ &{\rm if }\ n\ge 2,\ 0<p\le 2;\\
1\ &{\rm if }\ n=2,\ 2<p\le\infty;\\
0\ &{\rm if }\ n\ge 3,\ 2<p\le\infty. \end{array} \right.
\end{equation}
For $0<p\le 2$, we get Schoenberg's result. The other two cases have
been investigated by Koldobsky~\cite{Kold91} in 1991 and
Zastavnyi~\cite{Zast91,Zast92,Zast93}  in 1991 ($2<p\le\infty$,
$n\ge 2$). Finally, Misiewiez \cite{Mis89} gave the last result in
1989 ($p=\infty$, $n\ge 3$). }
\item[]
\item[Example 2] {\rm
 If $\rho(t)$ is a norm on $\R^2$, then
$e^{-\rho^\alpha(t)}\in\Phi(\R^2)$ for all $0\le\alpha\le 1$. This
is a well-know fact (see, for example, \cite{JMVA}).
 Therefore
$e^{-\lambda h(t)}\in \Phi(\R^2)$ $\forall \lambda>0$, where
$h(t)=\rho^\alpha(t)+c$, $0\le\alpha\le 1$, $c>0$.
 } \item[] \item[Example 3]  {\rm
  Let  $\psi(s)\in\R$ $\forall s>0$. Then, it is well known that
   $$e^{-\lambda\psi}\in M_{(0,+\infty)}\;\forall \lambda>0\iff\psi'\in M_{(0,+\infty)}.$$
Gneiting \cite{Gneiting_2002} proves the following: if $\psi\in
C_{[0,+\infty)}$,  $\psi(s)>0$ $\forall s\ge 0$,
 and $\psi'\in M_{(0,+\infty)}$,
then $e^{-\lambda h(t)}\in\Phi(\R^n)$ for all $\lambda>0$, $
n\in\N$, where   $h(t):=\psi(||t||_2^2)$ and, hence,
  $$K(x,t):=(\psi(||t||_2^2))^{-\frac d2}\varphi\left(\frac{||x||_2^2}{\psi(||t||_2^2)}\right)\in \Phi(\R^d\times \R^n)
   \;\forall \varphi\in C_{[0,+\infty)}\bigcap M_{(0,+\infty)}\,,\,d\in\N\,.$$

  } \item[] \item[Example 4] By celebrated Schoenberg's Theorem \cite{Scho39}, if $h(-t)=h(t)$ $\forall t\in E$, then
 $$
 h(t)\in N(E)\iff e^{-\lambda h(t)} \in \Phi(E)\; \forall \lambda>0\,.
$$ \item[] \item[Example 5] {\rm
  Let  $g\in \Phi(E)$, $g(-t)=g(t)$ for all $t\in E$ and $h(t):=g(0)-g(t)+c$, $c>0$.
  Then $h(t)>0$ $\forall t\in E$, $h\in N(E)$ and, hence,
  $e^{-\lambda h(t)}\in\Phi(E)$ for all $\lambda>0$.
  }

\end{description}

\section{Necessary conditions for compactly supported functions of the Gneiting type}

From now on let us write $\s^{d-1}:=\{x\in\R^d:||x||_2=1\}$ for the
sphere of $\R^d$.

\begin{theorem}\label{th1}
Let the next conditions be satisfied:
\begin{description}
\item[] 1) $h\in C(E)$, $h(t)>0$  $\forall t\in E$ and $h(t)\not\equiv h(0)$ on $E$.
\item[] 2) $\varphi\in C_{[0,+\infty)}$, $\varphi(0)>0$.
\item[] 3) For $d \in \mathbb N$, $\rho\in C(\R^d)$,
$\rho(tx)=|t|\rho(x)$ $\forall t\in\R, x\in\R^d$ and $\rho(x)>0$,
$x\ne 0$.
\item[] 4) $
 K(x,t):=(h(t))^{-\frac d2}\varphi\left(\frac{\rho^2(x)}{h(t)}\right)\in \Phi(\R^d\times E)
 $.
\end{description}
Then:
\begin{description}
\item[1.] $(h(t))^{-\frac d2}\in \Phi(E)$ and $\varphi\left(\rho^2(x)\right)\in\Phi(\R^d)$.
\item[2.] If there exists a $n\in\N\bigcap[1,d]$ such that
  $\int_{0}^{+\infty}|\varphi(u^2)|u^{n-1}\,du<+\infty$,
  then  $\forall m=1,\ldots,n$ and $v\in\R^m$ the function
  $s \mapsto f_{m,v}(s):=s^{m-d}G_m(sv)$, with $G_m(\cdot)$ as
  defined in (\ref{Gm}),
 is decreasing on $(0,+\infty)$. Furthermore, $f_{m,v}(+\infty)=0$ for $v\ne 0$.
\item[3.] If $\int_{0}^{+\infty}|\varphi(u^2)|u^{d-1}\,du<+\infty$,
then
 $G_d(0)>0$.
 If, in addition, $G_d$ is real-analytic, then $\forall v\in\R^d$, $v\ne 0$ the function $s \mapsto f_{d,v}(s):=G_d(sv)$
is strictly decreasing on $[0,+\infty)$ and $G_d(v)>0$ $\forall
v\in\R^d$.
\item[4.] If
  $\int_{0}^{+\infty}|\varphi(u^2)|u^{d+1}\,du<+\infty$, then
  $\alpha_1(v):=\int_{\R^d}\varphi(\rho^2(y))(y,v)^2\,dy\ge 0$ $\forall v\in\s^{d-1}$ and
  $\beta_1:=\int_{\R^d}\varphi(\rho^2(y))||y||_2^2\,dy\ge 0$.
  Furthermore, $\alpha_1(v)\equiv 0$ on $\s^{d-1} \iff \beta_1=0$.
   If, in addition, $\beta_1> 0$,
   then
   $e^{-\lambda h(t)} \in \Phi(E)\; \forall \lambda>0$.
\item[5.] If $\int_{0}^{+\infty}|\varphi(u^2)|e^{\varepsilon
u}\,du<+\infty$ for some $\varepsilon >0$ (for example, when
$\varphi$ has compact support), then $\exists p\in\N:$ $e^{-\lambda
h^p(t)} \in \Phi(E)\; \forall \lambda>0$. In practice, for $p$ it is
possible to take one of the following numbers:
$$
  p(v):=\min\left\{k\in\N:\;\alpha_k(v)=\int_{\R^d}\varphi(\rho^2(y))(y,v)^{2k}\,dy\ne 0\right\}
  \;,\;v\in\s^{d-1}\;,
$$
$$
  q:=\min\left\{k\in\N:\;\beta_k=\int_{\R^d}\varphi(\rho^2(y))||y||_2^{2k}\,dy\ne 0\right\}
  \;.
$$
The function $p(\cdot)$ is bounded on $\s^{d-1}$ and
$q=\min\limits_{v\in\s^{d-1}}p(v)$.
\end{description}
\end{theorem}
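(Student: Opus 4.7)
The key observation is that $K\in\Phi(\R^d\times E)$ restricts to $\Phi(\R^m\times E)$ for every linear embedding $\R^m\hookrightarrow\R^d$, and under the integrability $\int_0^{+\infty}|\varphi(u^2)|u^{m-1}\,du<+\infty$, Lemma~\ref{le1} applied with $b=h^{-d/2}$ then yields the master relation
\begin{equation*}
F_{m,v}(t) \;:=\; h(t)^{(m-d)/2}\,G_m\bigl(\sqrt{h(t)}\,v\bigr) \;\in\; \Phi(E)\qquad\forall v\in\R^m.
\end{equation*}
I would derive every item from this using only the bound $|f(t)|\le f(0)$ for $f\in\Phi(E)$, the fact that $f(0)-f\in N(E)$ when $f\in\Phi(E)$, closure of $\Phi(E)$ and $N(E)$ under pointwise limits, and the Schoenberg equivalence of Example~4. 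Note that $K$ being PD forces $h(-t)=h(t)$ (compare $K(0,t)$ with $K(0,-t)$ using $\varphi(0)>0$), so Schoenberg is available.

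For Part~1, setting $x=0$ in $K$ gives $h^{-d/2}\in\Phi(E)$, and fixing $t=0$ together with the homogeneity of $\rho$ gives $\varphi(\rho^2(\cdot))\in\Phi(\R^d)$. The PD inequality then implies $h(t)\ge h(0)$ pointwise, a fact I use repeatedly below. For Part~2, replacing $v$ by $\varepsilon v$ in the master relation rewrites it as $\varepsilon^{d-m}f_{m,v}(\varepsilon\sqrt{h(t)})\in\Phi(E)$, and the PD bound yields $f_{m,v}(\mu s)\le f_{m,v}(s)$ for every $s>0$ and every $\mu=\sqrt{h(t)/h(0)}\ge 1$; continuity and non-constancy of $h$ make these $\mu$ fill some $[1,r]$ with $r>1$, and iteration delivers monotonicity on $(0,+\infty)$. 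Non-negativity of $G_m$ comes from Bochner applied to $\varphi(\rho^2(\cdot))\in\Phi(\R^m)\cap L(\R^m)$, while $f_{m,v}(+\infty)=0$ for $v\ne 0$ follows from Riemann--Lebesgue (strengthened by the factor $s^{m-d}$ when $m<d$). For Part~3, $G_d(0)=0$ would force $G_d\equiv 0$ by the monotonicity just proved, hence $\varphi(\rho^2(\cdot))\equiv 0$ by Fourier inversion, contradicting $\varphi(0)>0$; under real-analyticity a non-constant, non-increasing analytic function cannot be locally constant and is therefore strictly decreasing, from which $G_d(v)>0$ is read off at $s=1$.

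For Part~4, $\int|\varphi(u^2)|u^{d+1}\,du<+\infty$ legitimises differentiating $G_d(sv)$ twice under the integral sign, producing
\begin{equation*}
G_d\bigl(\varepsilon\sqrt{h(t)}\,v\bigr) \;=\; G_d(0) \;-\; \tfrac{\varepsilon^2 h(t)}{2}\,\alpha_1(v) \;+\; o(\varepsilon^2)\qquad(\varepsilon\to 0).
\end{equation*}
The inequality $G_d(sv)\le G_d(0)$ near $s=0$ gives $\alpha_1(v)\ge 0$, so the quadratic form $\alpha_1$ is PSD; its trace over any orthonormal basis equals $\beta_1$, and a PSD form with zero trace vanishes identically, yielding $\alpha_1\equiv 0\iff\beta_1=0$. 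When $\beta_1>0$, choose $v$ with $\alpha_1(v)>0$, divide the CND function $G_d(\varepsilon\sqrt{h(0)}v)-G_d(\varepsilon\sqrt{h(t)}v)$ by $\varepsilon^2\alpha_1(v)/2$, and let $\varepsilon\to 0$; the pointwise limit $h(t)-h(0)$ lies in $N(E)$ by closure of $N(E)$ under pointwise convergence, hence so does $h$, and Example~4 closes the argument.

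Part~5 pushes the same scheme to higher order. Exponential integrability yields a Paley--Wiener bound that makes $G_d$ real-analytic, so
\begin{equation*}
G_d(sv)-G_d(0) \;=\; \sum_{k\ge 1}\frac{(-1)^k s^{2k}}{(2k)!}\,\alpha_k(v),
\end{equation*}
the odd moments vanishing by $\rho(-y)=\rho(y)$. For $v\ne 0$ strict monotonicity (Part~3) precludes all $\alpha_k(v)$ being zero, so $p(v)<+\infty$; boundedness of $p$ on $\s^{d-1}$ then follows by compactness from the open cover $\s^{d-1}=\bigcup_N\{p\le N\}$. The inequality $G_d(sv)\le G_d(0)$ near $s=0$ forces $(-1)^{p(v)+1}\alpha_{p(v)}(v)>0$, and extracting the leading $\varepsilon^{2p(v)}$ coefficient in the CND-limit procedure of Part~4 delivers $h^{p(v)}\in N(E)$ and hence $e^{-\lambda h^{p(v)}}\in\Phi(E)$. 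For the identification $q=\min\{k:\beta_k\ne 0\}$ I would use the spherical-average identity $\int_{\s^{d-1}}\alpha_k(v)\,d\sigma(v)=c_{d,k}\beta_k$ (rotational invariance): for $k<q$, $\alpha_k$ vanishes on $\s^{d-1}$ and hence identically, being homogeneous, so $\beta_k=0$; whereas $\alpha_q$ is a nontrivial, constant-sign homogeneous polynomial (the sign pinned by the inequality above), giving $\beta_q\ne 0$. The main obstacle I anticipate lies in making the CND limit of Parts~4--5 rigorous: the Taylor remainders must be controlled uniformly in $t$ so that, after division by $\varepsilon^{2p(v)}$, the limit of bona fide CND functions still produces a CND conclusion; this is precisely where the exponential integrability of Part~5 does the essential quantitative work.
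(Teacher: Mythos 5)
Your proposal is correct and, for most of the theorem, coincides with the paper's proof: the master relation $h(t)^{(m-d)/2}G_m(\sqrt{h(t)}\,v)\in\Phi(E)$ obtained from Lemma~\ref{le1} with $b=h^{-d/2}$, the bound $|f(t)|\le f(0)$ converted into $f_{m,v}(\mu s)\le f_{m,v}(s)$ for all $\mu$ in some $[1,r]$, $r>1$, by the intermediate value theorem and then iterated to get monotonicity, the Riemann--Lebesgue limit, the analyticity argument for strict decrease, and the identification of the Taylor coefficients of $s\mapsto G_d(sv)$ with the moments $\alpha_k(v)$ combined with spherical averaging (your trace argument for $k=1$ is an equivalent, slightly cleaner variant of the paper's identity \eqref{cdp}, which the paper also needs for statement 5). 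The only genuine divergence is the final step of statements 4 and 5: you rescale the conditionally negative definite functions $G_d(\varepsilon\sqrt{h(0)}\,v_0)-G_d(\varepsilon\sqrt{h(t)}\,v_0)$, pass to the pointwise limit to get $h^p-h^p(0)\in N(E)$, and invoke Schoenberg's theorem of Example~4, for which you correctly secure the evenness $h(-t)=h(t)$ from $K(0,\cdot)$; the paper instead forms the powers $\psi_n(t)=\bigl(G_d(\gamma_n\sqrt{h(t)}\,v_0)/G_d(0)\bigr)^n\in\Phi(E)$ with $\gamma_n\sim c\,(\lambda/n)^{1/(2p)}$ and lets $\psi_n\to e^{-\lambda h^p}$ pointwise, using only closure of $\Phi(E)$ under products and pointwise limits, thereby bypassing $N(E)$ and Schoenberg entirely. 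Both finishes work. Finally, the ``main obstacle'' you anticipate is not one: positive and negative definiteness are conditions on finite point configurations, so pointwise convergence in $t$ suffices, and for each fixed $t$ the remainder is automatically $o(\varepsilon^{2p})$; the exponential integrability in statement 5 is needed only to make $G_d$ real-analytic (so that the order-$2p(v)$ expansion with nonvanishing leading coefficient exists), not for any uniform quantitative control.
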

\begin{proof}
 The statement {\bf 1} is obvious.

Let us prove the statement {\bf 2.} By Lemma \ref{le1}, we have
 $$
 F_{m,v}(t):=(h(t))^{\frac{m-d}{2}}G_m(\sqrt{h(t)}v)\in \Phi(E)\;,\;\forall
 m=\overline{1,n}\;,\;v\in\R^m\;.
 $$
 Hence, $F_{m,v}(0)=(h(0))^{\frac{m-d}{2}}G_m(\sqrt{h(0)}v)\ge 0$ and $|F_{m,v}(t)|\le F_{m,v}(0)$, $t\in E$.
 Therefore $G_m(v)\ge 0$, $v\in\R^m$, and
 $$
 (sh(t))^{\frac{ m-d}{2}}G_m(\sqrt{h(t)}sv)\le
 (sh(0))^{\frac{ m-d}{2}}G_m(\sqrt{h(0)}sv)
 \;,\;\forall
 m=\overline{1,n}\;,\;v\in\R^m\;,\;s> 0\;,\;t\in E\,.
 $$
 The latter inequality is equivalent to
$$
f_{m,v}\left(\sqrt{\frac{h(t)}{h(0)}}\cdot s\right)\le
f_{m,v}\left(s\right) \;,\;\forall
 m=\overline{1,n}\;,\;v\in\R^m\;,\;s> 0\;,\;t\in E\,.
$$
 Since $(h(t))^{-\frac d2}\in \Phi(E)$, then $h(t)\ge h(0)$, $t\in E$.
 Since $h(t)\not\equiv h(0)$ on $E$, then there exists a point $t_0\in E$ such
 that $q:=\sqrt{\frac{h(t_0)}{h(0)}}>1$.
 By the intermediate values Theorem $\forall \alpha\in [1,q]$ $\exists \xi\in E:$ $\sqrt{\frac{h(\xi)}{h(0)}}=\alpha$.
 Therefore,
 $f_{m,v}(\alpha s)\le f_{m,v}(s)$ for all $s>0$ and $\alpha \in[1,q]$.
 Hence, $f_{m,v}(\alpha^2 s)\le f_{m,v}(\alpha s)\le f_{m,v}(s)$ for all $s>0$ and $\alpha \in[1,q]$.
 Thus, $f_{m,v}(\alpha^p s)\le f_{m,v}(s)$ for all $s>0$,  $\alpha \in[1,q]$ and $p\in \N$.
 This implies that the function  $f_{m,v}(s)$  decreases in
 $s\in(0,+\infty)$. By the Riemann-Lebesgue Theorem, it follows that $G_m(v)\to
 0$ as $||v||_2\to +\infty$. Hence $f_{m,v}(+\infty)=0$ for $v\ne 0$.
 The statement {\bf 2} is proved.

Let us prove the statement {\bf 3.}
 {\bf i}. From statement {\bf 2} it follows that for all $v\in\R^d$, $v\ne 0$,
 the function  $G_{d}(sv)$  decreases in $s\in[0,+\infty)$ and,
 hence, $0\le G_d(v)\le G_d(0)$. Therefore, $G_d(0)>0$
 (otherwise $G_d(v)\equiv 0$ on $\R^d$ $\Rightarrow$ $\varphi(\rho^2(y))\equiv 0$ on $\R^d$,
 that contradicts the condition  $\varphi(0)>0$).

 {\bf ii}. If, in addition, $G_d$ is real-analytic,
  then $\forall v\in\R^d$, $v\ne 0$, the function  $G_d(sv)$ strictly decreases on
$[0,+\infty)$. This can be proved by contraddiction. Let us assume
that, for some $v_0\in\R^d$ and $v_0\ne 0$, the function
 $G_d(sv_0)$ is constant on some interval
  $(\alpha,\beta)\subset (0,+\infty)$, $\alpha<\beta$. This would imply
  that $G_d$
  it is constant on $[0,+\infty)$ and
  $G_d(0)=\lim_{s\to+\infty}G_d(sv_0)=0$, which contradicts {\bf i}.
  Thus, $\forall v\in\R^d$, $v\ne 0$, the function  $G_d(sv)$ strictly decreases on
$[0,+\infty)$ and, hence, $G_d(v)>\lim_{s\to+\infty}G_d(sv)=0$. The
statement {\bf 3} is proved.

 Let us prove statement {\bf 4.} Let $v\in\s^{d-1}$ and
$f_{d,v}(s):=G_d(sv)$. From statements {\bf 2} and  {\bf 3}, it
follows that the function  $f_{d,v}(s)$  decreases on $[0,+\infty)$
and that $f_{d,v}(0)>0$. Obviously, $f_{d,v}(s)\in C^2(\R)$ and
$$
 f_{d,v}(s)=f_{d,v}(0)+\frac{ f''_{d,v}(0)}{2}\,s^2+o(s^2)\;,\;s\to 0\,,
$$
where $f''_{d,v}(0)=-\alpha_1(v)$. Note that  $f''_{d,v}(0)\le 0$,
otherwise the function  $f_{d,v}(s)$ strongly increases on $[0,c]$
for some $c>0$, which contradicts statement {\bf 2}. Thus,
$\alpha_1(v)\ge 0$ for all $v\in\s^{d-1}$.
 For $p>0$, the next  integral is constant on $\s^{d-1}$:
 $$
 \int_{\s^{d-1}} |(y,v)|^p \,d\sigma(v)\equiv
 c_{d,p}>0\;,\;y\in\s^{d-1}\;,
 $$ where  $d\sigma$, if $n\ge 2$, is the surface measure on $\s^{d-1}$
and $d\sigma(v)=\delta(v-1)+\delta(v+1)$, if $d=1$ (here $\delta
(v)$ - the Dirac measure with mass $1$ concentrated in the point
$v=0$).
 Therefore,
\begin{equation}\label{cdp}
\int_{\s^{d-1}} |(y,v)|^p \,d\sigma(v)=
 c_{d,p}||y||^p_2\;,\;y\in\R^{d}\,,\,p>0.
\end{equation}
Hence
$$
\int_{\s^{d-1}} \alpha_1(v)\,d\sigma(v)=c_{d,2}\,\beta_1\ge 0
$$
and $\alpha_1(v)\equiv 0$ on $\s^{d-1} \iff \beta_1=0$.\\
Let, in addition, $\beta_1>0$. Then
$f''_{d,v_0}(0)=-\alpha_1(v_0)<0$ for some $v_0\in\s^{d-1}$ and
\begin{equation}\label{psi_n}
\psi_n(t):=\left(\frac{G_d(\gamma_n\sqrt{h(t)}\,v_0)}{G_d(0)}\right)^n=(1+g_n(t))^n\in\Phi(E)\;,\;
\forall n\in\N\,,\,\gamma_n>0.
\end{equation}
Take
$$
  \gamma_n:=\left(-\frac{2f_{d,v_0}(0)}{f''_{d,v_0}(0)}\cdot\frac{\lambda}{n}\right)^{\frac 12}>0\;,\;\lambda>0.
$$
Obviously, $\gamma_n\to+0$ and
$$
 g_n(t)=\frac{f_{d,v_0}(\gamma_n\sqrt{h(t)})-f_{d,v_0}(0)}{f_{d,v_0}(0)}\sim
 \frac{f''_{d,v_0}(0)}{2f_{d,v_0}(0)}\cdot(\gamma_n\sqrt{h(t)})^2=-\,\frac{\lambda}{n}\cdot
 h(t)\;,\;n\to\infty\,.
$$
Therefore, $\psi_n(t)\to e^{-\lambda h(t)}$ and, hence,
 $e^{-\lambda h(t)}\in\Phi(E)$ for all $\lambda>0$.
 The statement {\bf 4} is proved.

 Let us prove the statement {\bf 5.}
 In this case  $G_d$ is real-analytic and
\begin{equation}\label{beta_k}
f^{(2k)}_{d,v}(0)=(-1)^k\alpha_k(v)\;,\;f^{(2k-1)}_{d,v}(0)=0\;,\;
 \int_{\s^{d-1}}
 \alpha_k(v)\,d\sigma(v)=c_{d,2k}\,\beta_k\;,\;k\in\N\,.
\end{equation}
  Therefore, $\forall
 v\in\s^{d-1}$ $\exists p\in\N$ so that
 $$
 f_{d,v}(s)=f_{d,v}(0)+\frac{ f^{(2p)}_{d,v}(0)}{(2p)!}\,s^{2p}+o(s^{2p})\;,\;s\to 0\,,
$$
where $f^{(2p)}_{d,v}(0)\ne 0$, otherwise the function
$f_{d,v}(0)\equiv f_{d,v}(s)\equiv f_{d,v}(+\infty)=0$ which
contradicts the inequality $G_d(0)>0$ (see statement {\bf 3}).
Hence, $f^{(2p)}_{d,v}(0)< 0$, otherwise the function $f_{d,v}(s)$
 strongly
increases on $[0,c]$ for some $c>0$, which contradicts statement
{\bf 2}. Thus the function $p(v)$, $v\in\s^{d-1}$, defines
correctly.

Let $v\in\s^{d-1}$ and $p=p(v)$. Take function \eqref{psi_n}, where
$v_0=v$
$$
  \gamma_n:=\left(-\frac{(2p)!f_{d,v_0}(0)}{f^{(2p)}_{d,v_0}(0)}\cdot\frac{\lambda}{n}\right)^{\frac{1}{2p}}>0\;,\;\lambda>0.
$$
Then $ g_n(t)\sim-\,\frac{\lambda}{n}\cdot h^p(t)$, $n\to\infty$.
Therefore
 $\psi_n(t)\to e^{-\lambda h^p(t)}$ and, hence,
 $e^{-\lambda h^p(t)}\in\Phi(E)$ for all $\lambda>0$.

 If $\alpha_k(v_0)\ne 0$ for some $v_0\in\s^{d-2}$, $k\in\N$,
  then $\alpha_k(v)\ne 0$ in some neighborhood of a point $v_0$
  and, hence, $p(v)\le p(v_0)$ in this neighborhood.
  Thus the function $p(v)$ is locally bounded on compact $\s^{d-1}$
  and, hence, $p(v)$ is  bounded on  $\s^{d-1}$.

 Let $m=\min\limits_{v\in\s^{d-1}}p(v)=p(v_0)$ for some
 $v_0\in\s^{d-1}$.
 Then $\alpha_m(v_0)\ne 0$ and for all $v\in\s^{d-1}$ equality
$$
 f_{d,v}(s)=f_{d,v}(0)+\frac{ f^{(2m)}_{d,v}(0)}{(2m)!}\,s^{2m}+o(s^{2m})\;,\;s\to 0
$$
  holds. Obviously  $(-1)^k\alpha_k(v)=f^{(2k)}_{d,v}(0)=0$, for all $1\le k<m$ (if $m\ge 2$),
  and $(-1)^m\alpha_m(v)=f^{(2m)}_{d,v}(0)\le 0$ (otherwise the function  $f_{d,v}(s)$ strongly increases on $[0,c]$
for some $c>0$ that contradicts a statement {\bf 2}). From
  \eqref{beta_k} follows that $\beta_k=0$ for all $1\le k<m$ (if $m\ge 2$)
   and $(-1)^m\beta_m<0$. Therefore $q=m$.

   The Theorem~\ref{th1} is proved.
\end{proof}

\begin{proof}[ {\bf Proof of Theorem \ref{th3}}]
 If $h(t)\equiv h(0)>0$ on $E$, then the implication 1) $\Rightarrow$ 2) is obvious.
 If $h(t)\not\equiv h(0)$ on $E$, then this implication follows from statement {\bf 4} of Theorem \ref{th1} for
 $\varphi(s)=e^{-s}\in C_{[0,+\infty)}\bigcap M_{(0,+\infty)}$.

 The reverse implication 2) $\Rightarrow$ 1) follows from Lemma \ref{le1}
 for $\varphi(s)=e^{-s}$,  equality
 $$
 \int_{\R^d}e^{-\frac{1}{2\sigma}||y||^2_2}\,e^{i(y,v)}\,dy=(2\pi \sigma)^{\frac d2}\,
 e^{-\frac{\sigma}{2}||v||^2_2}\;,\;v\in\R^d\,,\,\sigma>0\,,
 $$
 and  Bernstein-Widder's Theorem. The proof is completed.
 \end{proof}

Next Theorem \ref{th2} is an addition to Theorem \ref{th1} for the
case $\rho(x)=||x||_2$.
\begin{theorem}\label{th2}
Let the next conditions be satisfied:\\
1) $h\in C(E)$, $h(t)>0$  $\forall t\in E$ and $h(t)\not\equiv h(0)$ on $E$.\\
2) $\varphi\in C_{[0,+\infty)}$, $\varphi(0)>0$.\\
3)
 $
 K(x,t):=(h(t))^{-\frac d2}\varphi\left(\frac{||x||_2^2}{h(t)}\right)\in \Phi(\R^d\times E)
 $.\\
If
 $\int_{0}^{+\infty}|\varphi(u^2)|u^{m-1}\,du<+\infty$ for some
 natural $m\in[1,d]$ and $g_m$ is real-analytic,
 then the function $f_m(s):=s^{m-d}g_m(s)$ strictly decreases on
 $(0,+\infty)$ and $g_m(s)>0$ for all $s>0$.
  \end{theorem}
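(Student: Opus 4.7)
The plan is to deduce the statement directly from statement \textbf{2} of Theorem~\ref{th1}, together with the extra hypothesis that $g_m$ is real-analytic. The key initial observation is that, since here $\rho(x)=\|x\|_2$, the relation $G_m(v)=(2\pi)^{m/2}g_m(\|v\|_2)$ recalled in the preliminaries specializes the auxiliary function $f_{m,v}$ of Theorem~\ref{th1} to
\[
f_{m,v}(s)=s^{m-d}G_m(sv)=(2\pi)^{m/2}\,s^{m-d}\,g_m(s\|v\|_2),
\]
so that for any unit vector $v\in\s^{m-1}$ one gets $f_{m,v}(s)=(2\pi)^{m/2}f_m(s)$. Applying statement~\textbf{2} of Theorem~\ref{th1} with such a $v$ therefore tells us that $f_m$ is (weakly) decreasing on $(0,+\infty)$ and that $\lim_{s\to+\infty}f_m(s)=0$.

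The next step upgrades weak to strict monotonicity via real-analyticity. Since $s\mapsto s^{m-d}$ is real-analytic and strictly positive on $(0,+\infty)$, and $g_m$ is real-analytic by assumption, $f_m$ is real-analytic on the connected set $(0,+\infty)$. Suppose, for contradiction, that $f_m$ is not strictly decreasing: then there exist $0<a<b$ with $f_m(a)=f_m(b)$, and the weak monotonicity already established forces $f_m$ to be constant on $[a,b]$. The identity principle for real-analytic functions on a connected set then propagates this to constancy on all of $(0,+\infty)$, and the limit at $+\infty$ implies $f_m\equiv 0$, hence $g_m\equiv 0$ on $(0,+\infty)$.

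It remains to rule out $g_m\equiv 0$. Under the hypothesis $\int_{0}^{+\infty}|\varphi(u^2)|u^{m-1}\,du<+\infty$, a passage to polar coordinates shows that $y\mapsto\varphi(\|y\|_2^2)$ lies in $L(\R^m)$, so $G_m$ is its classical Fourier transform. If $g_m\equiv 0$ on $(0,+\infty)$, then by the relation $G_m(v)=(2\pi)^{m/2}g_m(\|v\|_2)$ and continuity at the origin we have $G_m\equiv 0$ on $\R^m$. Injectivity of the Fourier transform on $L^1(\R^m)$ then yields $\varphi(\|y\|_2^2)=0$ for almost every $y\in\R^m$, and continuity of $\varphi$ gives $\varphi\equiv 0$ on $[0,+\infty)$, contradicting $\varphi(0)>0$.

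Combining these steps, $f_m$ is strictly decreasing on $(0,+\infty)$, and since $\lim_{s\to+\infty}f_m(s)=0$ we obtain $f_m(s)>0$ for all $s>0$; dividing by $s^{m-d}>0$ gives $g_m(s)>0$ for every $s>0$, as required. The main conceptual step is the analytic-continuation argument that turns local constancy of $f_m$ into global constancy; once that is in hand, the rest is a clean bookkeeping built on Theorem~\ref{th1} and Fourier uniqueness.
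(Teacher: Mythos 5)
Your proposal is correct and follows essentially the same route as the paper: invoke statement \textbf{2} of Theorem~\ref{th1} (specialized to $\rho=\|\cdot\|_2$ via $G_m(v)=(2\pi)^{m/2}g_m(\|v\|_2)$) to get weak monotonicity and $f_m(+\infty)=0$, then use real-analyticity and the identity principle to force either strict decrease or $f_m\equiv 0$, the latter being excluded by Fourier injectivity and $\varphi(0)>0$. Your write-up is slightly more explicit than the paper's on the $L^1$/Fourier-uniqueness step, but there is no substantive difference.
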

 \begin{proof}
 From Theorem \ref{th1} it follows that $f_m$ decreases on
 $(0,+\infty)$ and $f_m(s)\ge f_m(+\infty)=0$ for $s>0$.
 Since $f_m$ is real-analytic on $(0,+\infty)$,
  then   function  $f_m(s)$ strictly decreases on
$(0,+\infty)$.
 Otherwise  the function
 $f_m$ is constant on some interval
  $(\alpha,\beta)\subset (0,+\infty)$, $\alpha<\beta$, and, hence,
  it is constant on $(0,+\infty)$ and
  $f_m(s)=f_m(+\infty)=0$, $s>0$.
  Therefore, $G_m(v)=(2\pi)^{\frac{m}{2}}g_m(||v||_2)\equiv 0$ on
  $\R^m$.
 Hence, $\varphi(||x||_2^2)\equiv 0$ on  $\R^m$, which contradicts the condition $\varphi(0)>0$.
  Thus,  the function  $f_m$ strictly decreases on
$(0,+\infty)$ and, hence, $f_m(s)>f_m(+\infty)=0$ for all $s>0$.
 The Theorem \ref{th2} is proved.
 \end{proof}

\section{Some statements involving a versatile general covariance function}

Previous results can be generalized to the class of positive
definite functions built in \cite{Porcu:Gregori:Mateu2006} and used
for the purposes highlighted in Section 1.

\begin{lemma}
 Let $\varphi\in
C\left([0,+\infty)^n\right)$, $n\in\N$, and
$\int_{0}^{\infty}\dots\int_{0}^{\infty}
\left|\varphi(u_1^2,\ldots,u_n^2)\right|\prod_{k=1}^{n}u_k^{d_k-1}
\;d u_1\ldots d u_n <+\infty $ for some $d_k \in \mathbb{N}$,
$k=1,\ldots,n$.
 Let
$h_k,b_k \in C(E_k)$, $h_k$ strictly positive in their arguments for
all $k=1,\ldots,n$. Then
$$K(x_1,\ldots,x_n,t_1,\ldots,t_n) :=  \varphi \left ( \frac{\|x_1\|^2_2}{h_1(t_1)},\dots,\frac{\|x_n\|^2_2}{h_n(t_n)} \right )\, \prod_{k=1}^{n}b_k(t_k) \in \Phi(\R^{d_1} \times\ldots\times \R^{d_n} \times E_1 \times\ldots\times E_n )
$$ if, and only if,
   $$
    g_{d_1,\ldots,d_n} \left(s_1\sqrt{h_1(t_1)},\ldots,s_n\sqrt{h_n(t_n)}\right)\, \prod_{k=1}^{n}b_k(t_k)  \big ( h_k(t_k)\big )^{d_k/2}
   \in \Phi(E_1 \times\ldots\times E_n)
   $$
 for every $s_k \geq 0$, $k=1,\ldots,n$, where
$$g_{d_1,\ldots,d_n}(s_1,\ldots,s_n):= \int_{0}^{\infty}\ldots \int_{0}^{\infty} \varphi(u_1^2,\ldots,u_n^2) \prod_{k=1}^{n}u_k^{d_k-1} j_{d_k/2-1}(s_ku_k)\; du_1\ldots du_n .$$
\end{lemma}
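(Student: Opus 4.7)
The plan is to imitate the proof of Lemma \ref{le1}, promoting every single-factor manipulation to its $n$-fold product analogue. Write $E:=E_1\times\cdots\times E_n$ and $\R^D:=\R^{d_1}\times\cdots\times\R^{d_n}$, and set $\mathbf{x}=(x_1,\ldots,x_n)$, $\mathbf{t}=(t_1,\ldots,t_n)$. Observe first, as in Lemma \ref{le1}, that the integrability assumption on $\varphi$ forces $\mathbf{x}\mapsto\varphi(\|x_1\|_2^2,\ldots,\|x_n\|_2^2)\in L(\R^D)$, so Fourier techniques apply freely.

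First I would reduce positive definiteness on $\R^D\times E$ to positive definiteness on $\R^D\times E_0$ for every $E_0=E_{0,1}\times\cdots\times E_{0,n}\in{\rm FD}(E)$ using part {\rm i} of the first lemma of Section 3, applied coordinatewise. Next, by iteratively applying part {\rm ii} of that lemma on each finite-dimensional factor $E_{0,k}$, the condition $K\in\Phi(\R^D\times E_0)$ is equivalent to
$$
K(\mathbf{x},\mathbf{t})\,\prod_{k=1}^{n}g_k(t_k)\in\Phi(\R^D\times E_0)\qquad\forall\,g_k\in\Phi(E_{0,k})\cap C_0(E_{0,k}).
$$
Since the resulting integrand is continuous and integrable, I test positive definiteness via Bochner by requiring that its Fourier transform in $(\mathbf{x},\mathbf{t})$ be nonnegative for all frequency vectors $(\mathbf{v},\mathbf{u})\in\R^D\times E_0$.

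The core computational step is the change of variables $x_k=\sqrt{h_k(t_k)}\,y_k$ in each of the $n$ spatial integrals separately; each produces a Jacobian factor $h_k(t_k)^{d_k/2}$ and turns $\|x_k\|_2^2/h_k(t_k)$ into $\|y_k\|_2^2$. After this substitution the $\mathbf{x}$-integral becomes
$$
\int_{\R^D}\varphi(\|y_1\|_2^2,\ldots,\|y_n\|_2^2)\,\prod_{k=1}^{n}e^{\,i\sqrt{h_k(t_k)}\,(y_k,v_k)}\,dy_1\cdots dy_n,
$$
which, upon passing to spherical coordinates in each $\R^{d_k}$ and invoking the Bessel-type formula (\ref{eq:hankel_euclidean}) one factor at a time, is exactly $g_{d_1,\ldots,d_n}\!\left(\sqrt{h_1(t_1)}\,\|v_1\|_2,\ldots,\sqrt{h_n(t_n)}\,\|v_n\|_2\right)$ up to a harmless positive constant. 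Reading the remaining nonnegativity statement as a Fourier condition on $E_0$, and then undoing the reductions of the first paragraph (removing the mollifier $\prod_k g_k$ and letting $E_0$ exhaust $E$), yields the claimed equivalence, since $g_{d_1,\ldots,d_n}$ depends only on $s_k=\|v_k\|_2\ge 0$.

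The routine part is bookkeeping of the constants and Jacobians. The step I expect to require most care is the multi-factor radial reduction: one must justify that the Bessel integral representation of the radial Fourier transform factors across the product $\R^{d_1}\times\cdots\times\R^{d_n}$, which follows from Fubini together with the absolute integrability hypothesis on $\varphi$ in the hypothesis. One should also verify that the arbitrariness of the mollifiers $g_k\in\Phi(E_{0,k})\cap C_0(E_{0,k})$ chosen as tensor products (e.g.\ the triangular mollifiers used in part {\rm ii} of the first lemma of Section 3) is enough to recover the general case, which is immediate by density and the closure of $\Phi$ under pointwise limits.
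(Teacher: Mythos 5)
Your proposal is correct and follows exactly the route the paper intends: the paper's own ``proof'' is just the remark that the statement is proved in the same way as Lemma \ref{le1}, and your argument is precisely that proof carried out blockwise (finite-dimensional reduction, tensor-product compactly supported mollifiers in the $t$-variables, Bochner, the change of variables $x_k=\sqrt{h_k(t_k)}\,y_k$, and the radial Bessel reduction in each $\R^{d_k}$). The two points you flag --- factorization of the radial reduction via Fubini and sufficiency of product mollifiers via the triangular mollifiers and pointwise limits --- are indeed the only details needing care, and you resolve them correctly.
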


\begin{proof}
The statement  can be proved in a similar way as Lemma \ref{le1}.
\end{proof}
Let $P_+^{\,n}$ be the set all finite nonnegative Borel  measures on
$[0,+\infty)^n$, $n\in\N$, and
$$\mathcal{L}^n:=\left\{\varphi(u_1,\dots,u_n)=\int_{0}^{\infty}\ldots \int_{0}^{\infty} e^{-(u_1v_1+\ldots+u_nv_n)}\; d\mu(v_1,\ldots, v_n)\;,\;\mu\in P_+^{\,n} \right\}\;.$$
 Obviously, $\mathcal{L}^1=C_{[0,+\infty)}\bigcap M_{(0,+\infty)}$ and
 $\prod_{k=1}^{n}\varphi_k(u_k)\in\mathcal{L}^n$ for every $\varphi_k\in\mathcal{L}^1$,  $k=1,\ldots,n$.
\begin{theorem}
Let $n\in\N$. For all $k=1,\ldots,n$, let $E_k$ be linear spaces,
$h_k$ strictly positive functions such that $h_k \in C(E_k)$ and
$d_k\in\N$. Then, the following statements are equivalent:
\begin{description}
  \item[1.]
  $K(x_1,\ldots,x_n,t_1,\ldots,t_n) :=  \varphi \left ( \frac{\|x_1\|^2_2}{h_1(t_1)},\dots,\frac{\|x_n\|^2_2}{h_n(t_n)} \right )\, \prod_{k=1}^{n} \big ( h_k(t_k)\big )^{-d_k/2}\in \Phi(\R^{d_1} \times\ldots\times \R^{d_n} \times E_1 \times\ldots\times E_n  )$
  $\forall\varphi\in\mathcal{L}^n$.
    \item[2.] $e^{-\lambda h_k(t_k)}\in \Phi(E_k)$, $\forall \lambda>0$, $k=1,\ldots,n$.
\end{description}
\end{theorem}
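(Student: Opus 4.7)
The theorem extends Theorem~\ref{th3} from a single scalar argument to a product of such arguments, so the plan is to reduce both implications to Theorem~\ref{th3} applied coordinate-by-coordinate, combined with three standard operations: products of positive definite functions on disjoint blocks of variables, pullbacks through linear projections, and scale mixtures against a finite nonnegative measure.

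For the direction $\mathbf{2}\Rightarrow\mathbf{1}$, I would use the defining representation of $\mathcal{L}^n$: every $\varphi\in\mathcal{L}^n$ is a $\mu$-mixture of the separable exponentials $(u_1,\dots,u_n)\mapsto\prod_{k=1}^{n}e^{-v_ku_k}$. Substituting and interchanging the $\mu$-integral with the finite quadratic form that defines positive definiteness (legal by Fubini, since $\mu$ is finite and each factor $e^{-v_k\|x_k\|_2^2/h_k(t_k)}\le 1$), the problem reduces to showing that
$$F_v(x_1,\dots,x_n,t_1,\dots,t_n):=\prod_{k=1}^{n}(h_k(t_k))^{-d_k/2}\,e^{-v_k\|x_k\|_2^2/h_k(t_k)}\in\Phi\bigl(\R^{d_1}\times\dots\times\R^{d_n}\times E_1\times\dots\times E_n\bigr)$$
for every fixed $v=(v_1,\dots,v_n)\in[0,\infty)^n$. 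Since $s\mapsto e^{-v_ks}$ lies in $C_{[0,+\infty)}\cap M_{(0,+\infty)}$, Theorem~\ref{th3} applied to $h_k$ guarantees that the $k$-th factor is positive definite on $\R^{d_k}\times E_k$. Viewing each factor on the full product space through the canonical projection (property~4 of the listed closure properties) and taking the product (property~2) then yields $F_v\in\Phi(\R^{d_1}\times\dots\times E_n)$, as required.

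For the direction $\mathbf{1}\Rightarrow\mathbf{2}$, I would test the hypothesis against the particular choice $\varphi(u_1,\dots,u_n)=\prod_{k=1}^{n}e^{-u_k}$, which lies in $\mathcal{L}^n$ (take $\mu$ to be the point mass at $(1,\dots,1)$). Fix an index $k$ and restrict $K$ to the linear subspace obtained by freezing $x_j=0$ and $t_j=0$ for all $j\neq k$; by property~4, the restriction is positive definite on $\R^{d_k}\times E_k$. The restriction evaluates to $C\cdot(h_k(t_k))^{-d_k/2}e^{-\|x_k\|_2^2/h_k(t_k)}$ with the positive constant $C=\prod_{j\neq k}(h_j(0))^{-d_j/2}$, hence
$$(h_k(t_k))^{-d_k/2}\,e^{-\|x_k\|_2^2/h_k(t_k)}\in\Phi(\R^{d_k}\times E_k).$$
Now statement~\textbf{4} of Theorem~\ref{th1} applied with $\varphi(s)=e^{-s}$ and $\rho(x)=\|x\|_2$ (where $\beta_1=\int_{\R^{d_k}}e^{-\|y\|_2^2}\|y\|_2^2\,dy>0$) concludes that $e^{-\lambda h_k(t_k)}\in\Phi(E_k)$ for every $\lambda>0$, at least when $h_k\not\equiv h_k(0)$; the degenerate case of constant $h_k$ is trivial because $e^{-\lambda h_k(0)}$ is then a positive constant.

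The only subtle point — what I would flag as the main obstacle — is the observation that Theorem~\ref{th3} is not actually needed in its stated generality (``for all $\varphi\in C_{[0,+\infty)}\cap M_{(0,+\infty)}$''): the proof given earlier extracts condition~2 from condition~1 using the single completely monotone $\varphi(s)=e^{-s}$, via statement~\textbf{4} of Theorem~\ref{th1}. This is what makes the restriction argument above work, since instantiating the hypothesis at one carefully chosen $\varphi\in\mathcal{L}^n$ delivers exactly one positive definite scalar function per coordinate, and that is precisely what Theorem~\ref{th1}.\textbf{4} needs. All other ingredients (Fubini, products across independent blocks, restrictions to coordinate subspaces) are routine.
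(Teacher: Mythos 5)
Your proof is correct and follows essentially the same route as the paper: both directions are reduced coordinate-by-coordinate to the scalar Gneiting characterization, with restriction to coordinate subspaces, products over blocks, and the mixture representation of $\mathcal{L}^n$ supplying the general $\varphi$. The only minor difference is in $1\Rightarrow 2$: you instantiate at the single test function $\varphi=\prod_{k}e^{-u_k}$ and invoke statement \textbf{4} of Theorem~\ref{th1} directly, whereas the paper tests with all $\varphi(u_1,\dots,u_n)=\varphi_k(u_k)$, $\varphi_k\in\mathcal{L}^1$, and cites Theorem~\ref{th3}, whose own proof of that implication is exactly the step you inlined.
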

\begin{proof}
Let us prove the  implication $(1) \Longrightarrow (2) $. For every
fixed $k=1,\ldots,n$ in condition {\bf 1}, we take
$\varphi(u_1,\dots,u_n)=\varphi_k(u_k)$,
$\varphi_k\in\mathcal{L}^1$, and $t_i=0\in E_i$ for $i\ne k$. Then
   $\varphi_k\left(\frac{||x_k||_2^2}{h_k(t_k)}\right)\,(h_k(t_k))^{-d_k/2}\in \Phi(\R^{d_k}\times E_k)$
   $\forall \varphi_k\in\mathcal{L}^1$.
 By Theorem~\ref{th3} we get $e^{-\lambda h_k(t_k)}\in \Phi(E_k)$ $\forall \lambda>0$.

 Let us now prove the reverse implication. Let
   $e^{-\lambda h_k(t_k)}\in \Phi(E_k)$, $\forall \lambda>0$, $k=1,\ldots,n$.
    By Theorem~\ref{th3}, we have that
    $\varphi_k\left(\frac{||x_k||_2^2}{h_k(t_k)}\right)\,(h_k(t_k))^{-d_k/2}\in \Phi(\R^{d_k}\times E_k)$
   $\forall \varphi_k\in\mathcal{L}^1$, $k=1,\ldots,n$. We take
   $\varphi_k(u_k)=e^{-u_kv_k}$, $v_k\ge 0$.
   From definition of class $\mathcal{L}^n$ follows, that
   $\varphi \left ( \frac{\|x_1\|^2_2}{h_1(t_1)},\dots,\frac{\|x_n\|^2_2}{h_n(t_n)} \right )\, \prod_{k=1}^{n} \big ( h_k(t_k)\big )^{-d_k/2}\in \Phi(\R^{d_1} \times\ldots\times \R^{d_n} \times E_1 \times\ldots\times E_n  )$
  $\forall\varphi\in\mathcal{L}^n$.
\end{proof}


\end{document}